\documentclass{acm_proc_article-sp}


\usepackage{comment,algorithm,algorithmic,url,multirow}
\usepackage{epsfig}

\newtheorem{theorem}{Theorem}

\newtheorem{lemma}{Lemma}

\newdef{definition}{Definition}

\newcommand{\anc}{\preceq}
\newcommand{\desc}{\succeq}

\def\letx{\textbf{let}~}
\def\assert{\textbf{assert}~}

\def\ignore#1{}

\def\dfs{\mathrm{DFS}}
\def\succ{\mathrm{succ}}
\def\live{\mathrm{live}}
\def\lead{\mathrm{lead}}
\def\floor{\mathrm{floor}}
\def\level{\mathrm{level}}
\def\root{\mathrm{root}}

\def\split{\mathrm{split}}
\def\assert{\textbf{assert}~}


\begin{document}

\title{Persistent Cache-oblivious Streaming Indexes}
%
\numberofauthors{1} 
\author{
\alignauthor
Andrew Twigg\thanks{Supported by a Junior Research Fellowship, St Johns College, Oxford}\\
       \affaddr{Oxford University}\\
      \email{adt@cs.ox.ac.uk}
}%
\date{\today}%

\maketitle
\begin{abstract}
In [SPAA2007], Bender et al. define a streaming B-tree (or index) as one that supports updates in amortized $o(1)$ IOs, and present a structure achieving amortized $O((\log N)/B)$ IOs and queries in $O(\log N)$ IOs. We extend their result to the partially-persistent case. For a version $v$, let $N_v$ be the number of keys accessible at $v$ and $N$ be the total number of updates. We give a data structure using space $O(N)$, supporting updates to a leaf version $v$ with $O((\log N_{v})/B)$ amortized IOs and answering range queries returning $Z$ elements with $O(\log N_{v} + Z/B)$ IOs on average (where the average is over all queries covering disjoint key ranges at a given version). This is the first persistent `streaming' index we are aware of, i.e. that supports updates in $o(1)$ IOs and supports efficient range queries.
\end{abstract}

\terms{Cache-oblivious algorithms, External-memory algorithms, Versioned data structures}

\section{Introduction}

Indexes (such as B-trees) are fundamental to many problems in storage, such as databases and file systems. In this paper we investigate IO-efficient persistent indexes that have fast updates (in $o(1)$ IOs per update), linear space and support efficient range queries.

Logically, there is a tree of versions $V$, and every version $v \in V$ has a dictionary $D_u$ mapping keys to values. We want to support the following operations, starting from an empty data structure and a single root node:
\begin{itemize}
\item $\texttt{update}_v (k,x)$: create a new child $w$ of $v$ with $D_w = D_v \cup \{ (k,v,x) \}$ (overwriting any previous elements for $k$). Return $w$.
\item $\texttt{query}_v (k_1,k_2)$: return $\{ (k,v,x) \in D_v : k \in [k_1,k_2] \}$. 
\end{itemize}

If $\texttt{update}_v(\cdot)$ only works on leaf versions $v$, then the structure is partially-persistent (the version tree is a line), otherwise it is fully-persistent. This work focuses on the partially-persistent case. In addition, we only the discuss the case of updates; deletes can be handled by inserting a key with a null value; when these values are encountered in the output of a query, we can ignore the element.

We say that a key $k$ is \emph{live} at $v$ if $k \in D_v$. We let $N_v = |D_v|$ (the quantity increases down the version tree) and $N$ be the total number of updates. Clearly, storing the dictionaries $\{D_v\}$ explicitly is not efficient - the description is purely to simplify the exposition. 

\subsection{Related work}
Unversioned indexes have been developed which support a range of tradeoffs between update and query performance. The cache-oblivious (CO) model was introduced by Frigo et al. \cite{Frigo:1999}. Several update/query tradeoffs are known for unversioned dictionaries in the CO model. The cache-oblivious lookahead array (COLA) of Bender et al. \cite{Bender:2007} supports updates in amortized $O(\log N/B)$ IOs and queries in $O(\log N + Z/B)$ IOs. The xDict structure of Brodal et al.\cite{Brodal:2012} supports a wide range of update/query tradeoffs in the  CO model.

The classic versioned analogue of the B-tree is the copy-on-write (CoW) B-tree \cite{episode,Rodeh:2008}, which 
is based on the path-copying technique of Driscoll et al. \cite{dsst} for making
pointer-based data structures fully-persistent. The structure has three problems: 1) each update may cause a new path to be written, giving $\Theta(N B \log_B N)$ space; 2) updates cost $O(\log_{B} N)$ IOs, and 3) it is not cache-oblivious.

Becker \cite{96-becker-mvbt} presented the multiversion B-tree (MVBT), which solves the space blowup problem. It achieves the same query and update bounds with $O(N)$ space, and is partially-persistent. Lanka et al. \cite {115861} developed two fully-persistent B-tree variants based on variants of the techniques from \cite{dsst}, but for both variants, either there is a large space blowup, or range queries may be far from optimal.
Recently, Brodal et al.~\cite{Brodal:fullypersistent} presented a fully-persistent B-tree that uses linear space, and achieves roughly the same update/query bounds (within a small factor) as the MVBT and the CoW B-tree. These results are summarized in Table \ref{tbl:comparison}.

Several solutions are known for offline (or batched) constructions of persistent B-trees. Arge et al. \cite{Arge:2003} use a modified MVBT to solve IO-efficient planar point location. Goodrich et al. \cite{Goodrich:1993} give an offline method for constructing a persistent B-tree using $O((N/B) \log_{M/B} (N/B))$ IOs (where $M$ is the size of the memory in the DAM model). It is not clear if any of the offline constructions can be made cache-oblivious.

Our structure is the first `streaming' persistent index we are aware of, i.e. that supports updates in $o(1)$ IOs and supports efficient range queries.

\begin{table*}
\center
\begin{tabular}{|c|c|c|c|c|c|c|}
\hline
\bf Result & \bf Cache-oblivious? & \bf Persistence & \bf Update & \bf Range query (size $Z$) & Space\\
\hline
B-tree \cite{1972-bayer-btree} & No & None & $O(\log_{B} N)$ & $O(\log_{B} N + Z/B)$ & $O(N)$ \\
CoW B-tree \cite{episode,Rodeh:2008} & No & Full & $O(\log_{B} N_{v})$ & $O(\log_{B} N_{v} + Z/B)$ & $O(N B \log_{B} N)$ \\
MVBT \cite{96-becker-mvbt} & No & Partial & $O(\log_{B} N_{v})$ & $O(\log_{B} N_{v} + Z/B)$ & $O(N)$ \\
Lanka et al. \cite{115861} & No & Full & $O(\log_{B} N_{v})$ & $O((\log_{B} N_{v})(1+Z/B))$ & $O(N)$ \\
Brodal et al. \cite{Brodal:fullypersistent} & No & Full & $O^{*}(\log_{B} N_{v} + \log_{2} B)$ & $O(\log_{B} N_{v} + Z/B)$ & $O(N)$ \\
COLA \cite{Bender:2007} & Yes & None & $O^{*}((\log N)/B)$ & $O(\log N + Z/B)$ & $O(N)$ \\
This paper & Yes & Partial & $O^{*}((\log N_{v})/B)$ & $O^{*}(\log N + \log^2 N_{v} + Z/B)$ & $O(N)$ \\
\hline
\end{tabular}
\caption{Comparing related work. Bounds marked $O^{*}(\cdot)$ are amortized, or average-case over operations on a given version. Only the last two structures are `streaming', i.e. support updates in $o(1)$ IOs.}
\label{tbl:comparison}
\end{table*}

Afshani et al. \cite{AHZ.LB.SOCG09} prove that any partially-persistent CO index that answers queries in $O(\log_B N + Z/B)$ IOs must use superlinear space. Our query bound comes close to this, except that the first term in our bound is $O(\log^2 N_v)$, and the bound is an average-case bound, not worst-case. It remains open to improve our query bound to a worst-case one.


\subsection{Our results}

We present a cache-oblivious, partially-persistent structure with the following properties:
\begin{itemize}
\item updates to a leaf version $v$ cost $O(\log N_v / B)$ amortized IOs
\item the structure uses space $O(N)$ for $N$ updates
\item a range query at version $v$ returning $Z$ elements costs $O(\log N + \log^2 N_v + Z/B)$ IOs on average (where the average is taken over queries for disjoint keys at version $v$).
\end{itemize}

In order to achieve these bounds, we must manage a tradeoff between duplicating enough keys so that range queries can be fast, and not duplicating too many so that we can retain fast updates and linear space. A novel part of our construction is a collection of exponentially-growing `versioned' arrays that simultaneously have a lower bound on `density' (the fraction of keys live at some version, see Section \ref{sec:density}) and an upper bound on the number of elements copied from other arrays.

\subsection{Structure of paper}

In Section \ref{sec:structure} we describe the data structure and the auxiliary structures to make it work. In Section \ref{sec:operations} we describe the update and query operations. In Section \ref{sec:analysis}, we prove the update, query and space bounds. We conclude by mentioning some open problems.

\section{The data structure}
\label{sec:structure}

In this section we describe the components of the structure, in order to describe the operations in the next section. At a high level, the structure consists of a collection of versioned arrays (arrays of elements each having a different version set attached to them), organized into levels, where the sizes of arrays roughly doubles between levels. Each level may have many arrays of roughly the same size, but each of them are tagged with disjoint version sets. In this way, a range query only needs to examine one array per level. Arrays are promoted to the next level when they become too large, and roughly speaking, arrays with overlapping version sets in the same level are merged together. In order to maintain the range query performance, we need (as in most of the persistent data structures) to duplicate some elements. We use the notions of live, lead and density to track the number of duplicated elements in arrays. When an array has too many replicated elements, we subdivide it into several smaller arrays, each of which has not too many replicated elements. This way, we control the balance between replicating enough elements for good range query performance and not replicating too much so that we get good update and space bounds. The rest of the section gives the details.

\subsection{Versioned arrays}
Elements are (key, version, value) tuples, which we often write as $(k,v,x)$, and sometimes we omit the value for simplicity. A versioned array $(A,W)$ stores a list of elements $(k,v,x)$ and a set of versions $W$, where every element $(k,v,x)$ is live for some version $w\in W$, and where $W$ is a connected subtree of $V$. Within an array $(A,W)$, elements are ordered lexicographically by $(k,v)$, where keys have some natural ordering and versions are ordered \emph{decreasing} by their DFS number in $W$ (so that each array has its own DFS numbering).

With this structure, we can search within an array as follows. For a key $k$, all the descendants of version $v$ form a contiguous region to the left of $v$ in the array. Hence $\texttt{query}_v(k_1,k_2)$ can be performed by first binary searching for the first element with key $k$ and then scanning to the right for the first element $(k,w,x)$ where $w$ is an ancestor of $v$; whis is the closest ancestor to $v$ for this key. We then continue scanning until we find the next key, until we have covered all keys in the range $[k_1,k_2]$.

There are two concerns with this searching method. First, it might be inefficient if there are lots of irrelevant (non-live) elements for the query version. We will deal with this by requiring that all arrays have some constant density bound, defined below. Second, we need a way to test ancestorship quickly for each element, without doing an IO per element. We use the following (well-known) method. Let $\dfs_W(x)$ be the DFS number of $x$ in the version tree $W$. Then for versions $w,v$, we have $w \anc v \iff \dfs(v) \in I(w)$. For each element $(k,w)$ in an array $(A,W)$, we store (alongside the element) the interval $$I(w) = [\dfs_W(w), \max_{x \desc w} \dfs_W(x)].$$  Now, for a query at version $v$ and an array $(A,W)$, if we know $\dfs_W(v)$, we can check ancestorship of elements in $(A,W)$ with no additional IO. Note that we still have to deal with efficiently figuring out $\dfs(v)$ on a query for $v$. For the partially-persistent case, this is straightfoward to encode into the version numbers - version $v_i$ has DFS $i$ if it was the $i$th version created. It is not clear how to do this efficiently for the fully-persistent case.

\subsection{Live, density, lead}
\label{sec:density}

An element $(k,v,x)$ is \emph{live} at $w$ if $(k,v,x) \in D_v$. For an array $(A,W)$, we let $\live(A,w)$ be the number of elements of $A$ live at $w$, and $\live(A,W) = \sum_{w \in W} \live(A,w)$. Array $(A,W)$ has \emph{density} $\delta(A,W) = \min_{w \in W} \live(A,w) /  |A|$.

An element $(k,v,x)$ is \emph{lead} at exactly one version $v$ (so that the lead elements are the `original' elements and the live elements are the `inherited' copies). We define $\lead(A,w)$ and $\lead(A,W)$ similarly to the live quantities. Array $(A,W)$ has \emph{lead fraction} $\sum_{w \in W} \lead(A,w) /  |A|$.

For an array $(A,W)$ and version $v \in W$, let $S(A,v)$ be a subarray containing all the elements of $A$ live at some descendant of $v$ in $W$.

The importance of density is the following. If density is too low, then scanning an array to answer a range query at version $v$ involves reading many elements not live at $v$. If we insist on density being too high, then many elements must be duplicated, leading to a large space blowup (in the limit, each array will contain live elements for a single version, i.e. the dictionaries $D_v$). Our construction guarantees that every array has density at least $1/6$, and almost all arrays have lead fraction at least $1/3$.

\subsection{Levels of arrays}
The data structure consists of a series of exponentially-growing arrays, organized into levels, as in the COLA \cite{Bender:2007}.  Every level $l$ may contain many arrays, where each array $(A,W)$ satisifies:
\begin{enumerate}
\item $|A| \le 2^{l+1}$
\item $\live(A,w) \ge 2^l / 3$, for all $w \in W$
\end{enumerate}
Note that these together imply that every array has density at least $1/6$. In addition, the sets $W_i$ appearing a given level are all disjoint. This means that, for each level, we only need to examine a single array to answer a query.

\subsection{Auxiliary structures}
\label{sec:prelim:aux}
We also keep an auxiliary index for each level that maps from versions to arrays in that level. For each array $(A,W)$ at level $l$, since $W$ is a connected subtree of $V$, we can describe it by storing an interval with its topmost and bottommost versions. Call this the interval for $W$. In every level, we store a COLA \cite{Bender:2007} on these intervals, keyed by the left part of the interval.

A lookup at level $l$ can query this structure in $O(\log N)$ IOs to determine which array to examine. The structure is modified only when some array $(A,W)$ is merged or rewritten at level $l$; in this case, at most $O(|W|)=O(|A|)$ elements may be modified in the structure.

We also store, in a separate structure, for every version $v$, the highest level where there is an element live at $v$. This allows us to only examine the $O(\log N_v)$ levels where there might be an array, instead of checking all the $O(\log N)$ levels.

In Section \ref{sec:faster} we show how to remove the need for a per-level auxiliary structure.

\section{Operations}
\label{sec:operations}


\subsection{Query}
\label{sec:query}

To answer $\texttt{query}_v(k_1,k_2)$, we do the following: for each level $l$, find the unique array $(A_l,W_l)$ that should be examined. As described above, binary search for the first element for key $k_1$ and scan to the right, reporting for each key $k$ in the range, the element for the first version $w$ that is an ancestor of $v$ we encounter. The reports from each level are then merged, and we finally report the closest ancestor version $w$ to $v$ for each key in the range. This final merge can of course be done while scanning each array in parallel so that we don't store the whole intermediate output.

\begin{algorithm}[!h]
\caption{\texttt{query}$_v(k1,k2)$}
\label{alg:query}
\begin{algorithmic}[1]
\FOR {each level $l$}
	\STATE \letx $(A,W)$ be the array with $v \in W_l$
	\STATE binary search for $k_1$ in $A$
	\STATE \letx $S_l = []$
	\FOR {each key $k$ in $[k_1,k_2]$}
		\STATE add to $S_l$ the first element $(k,y)$ where $anc(y,v)$
	\ENDFOR
\ENDFOR
\STATE \letx $S$ = merge$(S_1...S_l)$
\FOR {each key $k$ in $S$}
	\STATE keep only the first element $(k,w,x)$ of $S$ // the closest ancestor to $v$
\ENDFOR
\RETURN $S$
\end{algorithmic}
\end{algorithm}

\subsection{Update}

An update operation $\texttt{update}_v(k,x)$ is the promotion of a singleton array with a new version into level 0, by calling \texttt{promote}$(\{(k,v+1,x)\},v+1,0)$. The \texttt{promote} algorithm is shown in Algorithm \ref{alg:promote}.

\begin{algorithm}[h]
\caption{\texttt{promote}$(A,W,l)$}
\label{alg:promote}
\begin{algorithmic}[1]
\STATE $(B,Y)=$ array at level $l$ where $Y$ has the closest ancestor to $\min(W)$
\STATE $(A',W')=(A \cup B, W \cup Y) = (A \cup B, [\min(Y),\max(W)])$
\STATE register $(A',W')$ at level $l$
\IF {$|A'| > 2^{l+1}$}
	\STATE \letx $(A'',W'') =$ \texttt{extract\_promotable}$(A',W',l)$
	\STATE \texttt{promote}$(A'',W'',l+1)$
	\STATE rewrite the array $(A',W') := (A' \setminus A'', W' \setminus W'')$
\ENDIF
\IF {$\delta(A',W') < 1/6$}
	\STATE \letx ${(A_1,W_1)...(A_k,W_k)}$ = \texttt{subdivide}$(A',W',l)$
	\FOR {each $i$}
		\STATE register $(A_i,W_i)$ at level $l$
	\ENDFOR
	\STATE unregister $(A',W')$ at level $l$
\ENDIF
\end{algorithmic}
\end{algorithm}

We first find the array $(B,Y)$ that the incoming array $(A,W)$ will merge with. We select the array whose version set contains the closest ancestor to $\min(W)$. In line 2, we merge the two arrays to get a new array $(A \cup A', W \cup W')$.

As a result of the merge, the new array may violate the size constraint at level $l$. In this case (lines 4-8), we search for and extract a promotable subarray -- that is, the largest subarray satisfying the conditions to exist at level $l+1$ -- and promote it to level $l+1$, if one exists. The remainder array is left at level $l$. This is shown in Algorithm \ref{alg:extract_promotable}. Note that the $S(\cdot)$ array extracted, if there is one, may contain duplicate 


\begin{algorithm}[h]
\caption{\texttt{extract\_promotable}$(A,W,l)$}
\label{alg:extract_promotable}
\begin{algorithmic}[1]
\STATE \letx $v$ be the highest version with $\live(A,v) > 2^{l+1}/3$ and
 $|S(A,v)| > 2^{l+1}$
\RETURN $S(A,v)$, or {\bf null} if no such $v$ exists
\end{algorithmic}
\end{algorithm}

The remainder array may not satisfy all the conditions for level $l$ - there may be a version $v$ that was previously dense in the smaller array pre-merge, but is no longer dense in the larger array post-merge. In this case (lines 9-15), we greedily subdivide the array into a collection of arrays $(A_i,W_i)$, each of which we will show later is dense and satisifes all the constraints for level $l$. The procedure is shown in Algorithm \ref{alg:subdivide}. 

\begin{algorithm}[h]
\caption{\texttt{subdivide}$(A,W,l)$}
\label{alg:subdivide}
\begin{algorithmic}[1]
\STATE \letx $C = \{\}$
\STATE \letx $v = \min(W)$
\WHILE {$W \ne \emptyset$}
	\STATE \assert $v$ is not a leaf
	\STATE \letx $w=child(v)$
	\IF {$|S(A,w)| >= 2^{l+1}$}
		\STATE $w := child(v)$
	\ELSE
		\STATE \letx $X$ be all descendants of $w$ in $W$
		\STATE add $(S(A,w),X)$ to $C$
		\STATE $W := W \setminus X$
		\STATE $v := \min(W)$
	\ENDIF
\ENDWHILE
\RETURN $C$
\end{algorithmic}
\end{algorithm}

Note that after this promotion and subdivision, some elements may be duplicated among arrays - these are the `live' elements kept around to ensure good range query performance. 

\section{Analysis}
\label{sec:analysis}

\begin{lemma}[Promotion]
\label{lem:promotion}
Consider an array $(A,W)$ promoted from level $l$ to $l+1$. It satisfies (1) $\live(A,v) \ge 2^{l+1}/3$ for all $v \in W$; (2) it contains at least $(2/3) 2^{l+1}$ lead elements; (3) the array has density $\ge 1/6$.
\end{lemma}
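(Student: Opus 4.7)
The plan is to leverage the specific version $v^\star$ chosen by \texttt{extract\_promotable}---the one satisfying $\live(A',v^\star)>2^{l+1}/3$ and $|S(A',v^\star)|>2^{l+1}$---together with the fact that the promoted array is $A''=S(A',v^\star)$ with version set $W''$ equal to the descendants of $v^\star$ in $W'$. Two observations will be used repeatedly. First, since $A'=A\cup B$ arises from two level-$l$ arrays each of size $\le 2^{l+1}$, we have $|A'|\le 2^{l+2}$, hence $|A''|\le 2^{l+2}$. Second, for any $w\in W''$, every element of $A'$ live at $w$ already lies in $S(A',v^\star)=A''$, because $w$ is itself a descendant of $v^\star$ and therefore witnesses membership in $S(A',v^\star)$; thus $\live(A'',w)=\live(A',w)$.

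For (1), it therefore suffices to show $\live(A',w)\ge 2^{l+1}/3$ for every $w\in W''$. At $w=v^\star$ this is the defining property of $v^\star$. For strict descendants, the plan is to combine the invariants inherited from level $l$---each of $A$ and $B$ has live count $\ge 2^l/3$ on its own version set---with the maximality of $v^\star$: any strict ancestor of $v^\star$ in $W'$ fails one of the two \texttt{extract\_promotable} conditions, so either its live count is already below $2^{l+1}/3$ or its $S(\cdot)$-subarray is too small, and either possibility forces the subtree rooted at $v^\star$ to contain enough overlapping live elements drawn from both $A$ and $B$ to push the combined live count past $2\cdot 2^l/3=2^{l+1}/3$.

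For (2), the lead elements of $A''$ are exactly those whose version lies in $W''$, while the inherited elements have version a strict ancestor of $v^\star$ in $W'$. The key observation is that any inherited $(k,u)\in A''$ is live at some $w\desc v^\star$, so no update to $k$ occurs in $(u,w]$, and in particular none in $(u,v^\star]$; hence $(k,u)$ is also live at $v^\star$, so the number of inherited elements is at most $\live(A',v^\star)$. An upper bound on $\live(A',v^\star)$ of the form $\le 2^{l+1}/3$ (up to lower-order terms), extracted from the fact that the immediate ancestor of $v^\star$ in $W'$ must fail one of the extraction conditions, combined with $|A''|>2^{l+1}$, yields at least $(2/3)\cdot 2^{l+1}$ lead elements. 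Finally, (3) follows immediately from (1) and the size bound: $\delta(A'',W'')\ge(2^{l+1}/3)/2^{l+2}=1/6$.

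The main obstacle is making rigorous use of the maximality clause of \texttt{extract\_promotable}: both the descendant liveness needed for (1) and the upper bound on $\live(A',v^\star)$ needed for (2) must be wrung from the failure of the two conditions at the immediate ancestor of $v^\star$ in $W'$, in a way that correctly accounts for the shadowing between elements of $A$ and $B$ introduced by the merge.
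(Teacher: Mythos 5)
Your decomposition into the promoted subarray $A''=S(A',v^\star)$ and your treatment of condition (3) are fine, but the plans for (1) and (2) both have genuine gaps. For (1), the paper's argument is simply that $\live(A',\cdot)$ is non-decreasing down the version tree (a fact it states explicitly in the subdivision lemma), so the defining bound $\live(A',v^\star)>2^{l+1}/3$ propagates to every descendant $w\in W''$; the maximality of $v^\star$ plays no role. Your proposed route --- combining the level-$l$ invariants of $A$ and $B$ with the failure of the extraction criteria at strict ancestors of $v^\star$ --- cannot work: the failure of the criteria at an ancestor yields an \emph{upper} bound on liveness there and constrains nothing about descendants, and the two merged version sets are stacked one above the other ($Y$ consists of ancestors of $\min(W)$), so there is no guarantee of any version at which the invariants of $A$ and of $B$ both contribute $2^l/3$ live elements to give the claimed $2\cdot 2^l/3$.

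For (2), you bound the number of inherited (non-lead) elements of $A''$ by $\live(A',v^\star)$ and then hope to upper-bound $\live(A',v^\star)$ by roughly $2^{l+1}/3$. No such bound exists: $\live(A',v^\star)>2^{l+1}/3$ is precisely the defining property of $v^\star$, and this quantity also counts the elements lead at $v^\star$ --- exactly the elements you are trying to show are numerous --- so it can be close to $|A''|$. The correct move, and the paper's, is to push the liveness one version higher: your own no-update-in-between observation shows every inherited element $(k,u)$ with $u\strictanc v^\star$ is live at the parent $p(v^\star)$, so the inherited count is at most $\live(A',p(v^\star))$; since $S(A',p(v^\star))\supseteq S(A',v^\star)$ has size $>2^{l+1}$, the parent can only fail the extraction criteria by having $\live(A',p(v^\star))<2^{l+1}/3$, whence the lead count is at least $|A''|-2^{l+1}/3>(2/3)2^{l+1}$.
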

\begin{proof}
For readability, we drop the array $A$ in the terms $S(), \live, \lead$ etc. unless we specify otherwise. 

Condition (1): follows directly from Algorithm \ref{alg:extract_promotable}.

Condition (2): let $v$ be the highest version where $\live(A,v) > 2^{l+1}/3$ and $|S(A,v)| > 2^{l+1}$. Let $U$ be the set of versions in $W$ descendant from $v$. Since $v$ was the first element that satisfied the promotion critera, we also have (letting $p(v)$ be the parent of $v$) $\live(p(v)) < 2^{l+1}/3$. Considering the elements of $A$ lead at $U$, we have
\begin{eqnarray*}
\lead(U) &=& |S(p(v))| - \live(p(v)) \\
&\ge& |S(v)| - \live(p(v)) \\
&\ge& 2^{l+1} - 2^{l+1}/3\\
&=& (2/3) 2^{l+1}.
\end{eqnarray*}

Condition (3): condition (1) and since the array has size $\ge 2^{l+2}$ imply that the density is at least $1/6$.
\qed
\end{proof}

The main result for subdivision is the following.
\begin{lemma}[Subdivision]
\label{lem:subdivision}
Consider the remainder $(A,W)$ after extracting all subarrays $(A',W')$ having $|A'| \ge 2^{l+1}$ and $\live(A',v) \ge 2^{l+1}/3$ for all $v\in W'$. Algorithm \ref{alg:subdivide} outputs arrays $(A_i, W_i)$ where (1) $|A_i| < 2^{L+1}$; (2) all arrays except at most one have lead fraction at least $2/3$; (3) all arrays have density $\ge 1/6$.
\end{lemma}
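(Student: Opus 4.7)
The plan is to treat the three conditions in turn, handling each emitted piece $(S(A,w), X)$ of Algorithm~\ref{alg:subdivide} one by one. Throughout I will suppress the $A$ argument as in the proof of Lemma~\ref{lem:promotion}, and write $v = p(w)$ for the ancestor that was current when $w$ was produced.

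Condition (1) is immediate: the algorithm adds $(S(A,w), X)$ only in the \textbf{else} branch, whose guard has just checked $|S(A,w)| < 2^{l+1}$.

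For condition (2), the plan is to mimic the accounting used in Lemma~\ref{lem:promotion}(2). If the algorithm descended to $v$ before examining $w$ (rather than $v$ being the current $\min(W)$ root), the descent guard guarantees $|S(A,v)| \ge 2^{l+1}$. The precondition (no promotable subarray remained after extraction, and \texttt{extract\_promotable} always picks the \emph{highest} qualifying version) forces $\live(A,v) < 2^{l+1}/3$ for every such $v$. A short ``closest ancestor'' argument then shows that every element of $S(A,w)$ whose lead version lies outside $X$ is live at $v$, so there are at most $\live(A,v) < 2^{l+1}/3$ such inherited elements. As in Lemma~\ref{lem:promotion}, this yields $\lead(X) \ge |S(A,v)| - \live(A,v) \ge \tfrac{2}{3}\, 2^{l+1}$, which combined with $|S(A,w)| < 2^{l+1}$ gives a lead fraction of at least $2/3$. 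The ``at most one exception'' is the first piece emitted before any descent has happened; once that first piece is removed, every subsequent emission follows a genuine descent and the bound applies.

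For condition (3), I would combine $|S(A,w)| < 2^{l+1}$ (from the emission guard) with the lower bound $\live(A,u) \ge 2^{l}/3$ for each $u \in X$, which is inherited from the level-$l$ invariants satisfied by the arrays that were merged to produce $(A,W)$ (every contributing array already had $\live \ge 2^{l}/3$ at each of its versions, and promotable subarrays having been removed does not decrease live counts on the remainder). The ratio then gives density $\ge (2^{l}/3)/2^{l+1} = 1/6$.

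The main obstacle is the condition (2) step when $v$ has several children in $W$: the inherited-element count has to be apportioned across the children that get emitted, and one has to verify that the ``closest ancestor'' bound lead-outside-$X \le \live(A,v)$ really applies per-child and that at most a single piece fails to have a descent-based parent. A secondary technical point is in condition (3), where one must check that the level-$l$ live lower bound transfers cleanly through the merge in Algorithm~\ref{alg:promote} and the subsequent extraction of promotable subarrays without losing more than the constant factor absorbed above.
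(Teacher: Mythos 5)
Your plan follows the paper's proof essentially step for step: condition (1) from the emission guard, condition (2) via the identity $\lead(X) = |S(p(w))| - \live(p(w))$ combined with the precondition that no version has both $|S(\cdot)| \ge 2^{l+1}$ and $\live(\cdot) \ge 2^{l+1}/3$, and condition (3) from the level invariant $\live \ge 2^{l}/3$ divided by the size bound $2^{l+1}$. Two remarks. First, the issue you flag as the "main obstacle" — apportioning inherited elements when $v$ has several children — does not arise: the lemma is stated for the partially-persistent structure, where the version tree is a path, so every version has at most one child in $W$. That is exactly what makes the accounting exact, since then $X$ is the set of \emph{all} strict descendants of $p(w)$ and every element of $S(p(w))$ is either live at $p(w)$ or lead at some version of $X$, giving $|S(p(w))| = \live(p(w)) + \lead(X)$ as an equality. (The paper explicitly defers the multi-child version of \texttt{subdivide} to the fully-persistent case in its open problems.)

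Second, there is one genuine piece of the paper's argument that your proposal omits entirely: showing that Algorithm~\ref{alg:subdivide} never descends to a leaf of $W$, i.e.\ that the \textbf{assert} on line 4 cannot fail. Without this the emitted collection is not even well-defined, so it is part of proving the lemma, not a side issue. The paper handles it by contradiction: if the walk reaches a leaf $v$, the descent guard gives $|S(A,v)| \ge 2^{l+1}$, hence $\live(A,v) < 2^{l+1}/3$ by the precondition, hence $\live(A,p(v)) < 2^{l+1}/3$ by monotonicity of $\live$ down the tree; but at a leaf $\live(A,v)=\lead(A,v)$ and $|S(A,v)| \le \lead(A,v) + \live(A,p(v)) \le (2/3)\,2^{l+1}$, a contradiction. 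You should add this step; the rest of your outline matches the paper's proof.
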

\begin{proof}
For readability, we drop the array $A$ in the terms $S(), \live, \lead$ etc. unless we specify otherwise. 

Condition (1): follows directly by looking at the algorithm.

Condition (2): from the conditions of the lemma, we have that for all $v\in W$, either $|S(v)| < 2^{l+1}$ or $\live(v) < 2^{l+1}/3$. Let $v$ be the highest version where $|S(v)| < 2^{l+1}$, with parent $p(v))$. Then $|S(p(v))| \ge 2^{l+1}$ and $\live(p(v)) < 2^{l+1}/3$.

Let $U$ be the descendants of $v$ in $W$, then similarly to the proof of Lemma \ref{lem:promotion}, we have$$\lead(U) = |S(p(v))| - \live(p(v)) > (2/3) 2^{l+1}.$$ Since $|S(v)| < 2^{l+1}$ and $\lead(U) > (2/3) 2^{l+1}$, this implies that the array $(S(v),U)$ output has lead fraction at least $2/3$. The preconditions hold after each iteration of the algorithm, since $|S(v)|$ and $\live(v)$ can only decrease by extracting some subarray.

Condition (3): from Lemma \ref{lem:promotion}, all promoted arrays $(A,W)$ satisfy $\live(A,v)>2^l/3$ for all $v\in W$, and the quantity $\live(\cdot)$ only increases down the version tree. Combined with (1), all the output arrays have density $\ge 1/6$. 

The only remaining thing that could go wrong is that we hit a leaf node when walking down the tree. We will show that this cannot happen. Assume we reach a leaf $v$. Then $|S(A,v)| \ge 2^{l+1}$ (*). Hence $\live(A,v) < 2^{l+1}/3$ and so $\live(p(v)) < 2^{l+1}/3$. Since $v$ is a leaf, we have $\live(v)=\lead(v)$, so
\begin{eqnarray*}
|S(A,v)| &\le& \lead(A,v) + \live(A,p(v)) \\
&\le& \live(A,v) + \live(A,p(v)) \\
&\le& (2/3) 2^{l+1}.
\end{eqnarray*}
This contradicts (*) above.  
\qed
\end{proof}

Together, Lemmas \ref{lem:promotion} and \ref{lem:subdivision} imply that all arrays have density at least $1/6$.

\subsection{Update bound}
Assume we have a memory buffer of size at least $B$. Each array involved in a merge has size at least $B$, so a merge of some number of arrays of total size $k$ elements costs $O(k/B)$ IOs. In the unversioned COLA \cite{Bender:2007}, each element exists in exactly one array and may participate in $O(\log N)$ merges. One difficulty here is that an element may exist in many arrays, and may also participate in many merges at the same level (e.g., when an array at level $l$ is subdivided and some subarrays repeatedly remain at level $l$). We shall prove the result using an accounting argument by charging a merge to the lead elements in the promoted array that triggered the merge.

\begin{theorem}
\label{thm:update1}
The operation $\texttt{update}_v(\cdot)$ costs amortized $O((\log N \log N_v) / B)$ cache-oblivious IOs.
\end{theorem}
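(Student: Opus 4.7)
The strategy is an amortized accounting argument, following the hint in the excerpt: charge each expensive level-$l$ event to the lead elements of the array that was just promoted into level $l$. I would first pin down the amortized cost of a single promotion event, then bound how often any one element is charged.

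First I would establish the per-event cost. With a memory buffer of size at least $B$, merging a collection of arrays of total size $k$ takes $O(k/B)$ IOs by streaming. The only additional cost is maintaining the per-level auxiliary COLA of \sec{prelim:aux}, which by \cite{Bender:2007} supports updates in amortized $O((\log N)/B)$ IOs; a merge or rewrite at level $l$ touches $O(|A|) = O(2^l)$ of its entries. Thus the total amortized cost of one ``promotion event'' at level $l$ -- the merge with the incumbent level-$l$ array, the optional extract-and-rewrite if the result exceeds $2^{l+1}$, and any immediate subdivision triggered by density dropping below $1/6$ -- is $O((2^l \log N)/B)$ IOs.

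Next I would charge this cost to the lead elements of the promoted array. By \lem{promotion}, any array promoted into level $l$ contains at least $(2/3)\cdot 2^l$ lead elements, so each such lead element absorbs only $O((\log N)/B)$ IOs per promotion. Subdivision work that immediately follows can be folded into the same event: by \lem{subdivision}, each output subarray (except at most one) has lead fraction at least $2/3$, so within each piece the local lead elements suffice to underwrite the split at the same per-element rate.

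Finally I would count the number of times a single element is charged. Each element has exactly one creation version, so by the disjointness of version sets within a level, its lead copy sits in at most one array per level; each level boundary it crosses therefore costs it at most $O((\log N)/B)$. The level invariant $\live(A,v) \ge 2^l/3$ for the array whose version set contains creation version $v$ forces $2^l/3 \le \live(A,v) \le N_v$, so the lead copy of an element created at version $v$ is confined to levels $l \le \log(3 N_v) = O(\log N_v)$. Summing the charges yields $O((\log N \log N_v)/B)$ amortized cost per element, and hence per update. The main obstacle is the bookkeeping around subdivisions that leave an element at the same level: between two successive upward promotions of a lead copy, that copy may sit through several subdivisions at level $l$, and I have to verify that each such subdivision either is folded into the triggering promotion (already paid for) or is locally absorbed by the lead elements of the piece being split, which is exactly the content of \lem{subdivision}.
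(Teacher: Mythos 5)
Your proposal is correct and follows essentially the same route as the paper: a credit argument that charges each promotion-triggered merge/subdivision at level $l$ to the $\Omega(2^l)$ lead elements of the arriving array (via Lemmas \ref{lem:promotion} and \ref{lem:subdivision}), bounds the maximum level reachable by a version-$v$ element by $O(\log N_v)$ using the liveness/density invariant, and layers the $O((\log N)/B)$ auxiliary-structure cost on top. Your explicit handling of repeated same-level subdivisions (charging them to the promotion that triggered them) is exactly the resolution the paper intends.
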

\begin{proof}
Assume that each IO costs $\$1$ and can read/write a block of $B$ elements. When $(A,W)$ is promoted, all its lead elements are given credit $\$c/B$, for some constant $c>0$. We will show that a $c$ exists so that the lead elements of $(A,W)$ can pay for all the IOs triggered at this level by the promotion of $(A,W)$ (this will be enough for an inductive argument).

By Lemma \ref{lem:subdivision}, every output array $(A_i,W_i)$ has lead fraction $\ge 1/3$ except at most one array that has size $\le 2^{l+1}$. So the total output size is at most
\begin{eqnarray*}
&& 3 \sum_i \lead(A_i,W_i) + 2^{l+1} \\
&\le & 6 . \lead(A,W) + 2^{l+1} \\
&\le & 6 |A| + 2^{l+1}\\
&\le & 7.2^{l+1} \\
\end{eqnarray*}
where the second line follows since every lead element in $(A,W)$ ends up in exactly one output array $(A_i,W_i$). By Lemma \ref{lem:promotion}, $(A,W)$ has at least $(2/3) 2^l$ lead elements. Therefore, choosing $c>21$ suffices.

Since every array has density $\ge 1/6$, any array $(A,W)$ with $v \in W$ has size at most $6 N_v$, so it cannot exist in a level higher than $6 \lceil \log n \rceil$. Hence each inserted element will be charged in total at most $\$O(c (\log N_v)/B)$.

We now need to account for updating the per-level auxiliary structures as described in \ref{sec:prelim:aux}, which will contribute an additional $O(\log N / B)$ factor per update. When an array $(A,W)$ is promoted into a level, at most $O(|W|)=O(|A|)$ update operations are made on the auxiliary structure (noting that we do not need to perform any reads, just updates, which may overwrite existing elements), which adds a total of $O((|A| \log N)/ B)$ IOs for each level.
\qed
\end{proof}

\subsection{Space bound}

\begin{theorem}
\label{thm:space}
The data structure requires space $O(N)$.
\end{theorem}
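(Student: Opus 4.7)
The plan is to bound the total space $T = \sum_l \sum_{(A,W)} |A|$ by $O(N)$ via a lead-charging argument: almost every array has lead fraction bounded below by a constant, so its size is at most a constant times its lead count, and the total lead count across the whole structure is at most $N$.

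I would first establish the key invariant that at any moment, $\sum_l \sum_{(A,W)} \lead(A,W) \le N$. Each of the $N$ inserts creates exactly one lead element, tied to its insertion version $u$. Promotion (Algorithm \ref{alg:promote}, line 7) moves the promoted subarray from level $l$ to level $l+1$ via the rewrite step that removes it from the source, so at any time each element resides at exactly one level. Subdivide (Algorithm \ref{alg:subdivide}) partitions the input version set $W$ into disjoint pieces $W_i$, so $u$ lies in precisely one $W_i$ and the element is lead-counted only in the corresponding output array; any duplicates appearing in other output arrays (created when $u$ has descendants in a different $W_j$ at which the element is still live) are live copies whose version sets do not contain $u$ and hence are not lead-counted. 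Thus the total lead count is preserved by every operation and equals the number of inserts.

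Combining these facts, for the ``good'' arrays with lead fraction at least $1/3$---which by Lemma \ref{lem:promotion} includes freshly promoted arrays, and by Lemma \ref{lem:subdivision} includes all but one of each subdivide's outputs---we have $|A| \le 3 \lead(A,W)$, so their total size is at most $3N$. The remaining ``exceptional'' arrays still satisfy $|A| \le 2^{l+1}$ and density $\ge 1/6$; I plan to bound their contribution by charging each exceptional array either to the subdivide event that created it---using the amortized count $O(N/2^l)$ of promotions into level $l$ from the update analysis of Theorem \ref{thm:update1}---or more carefully to the lead elements in nearby arrays whose duplicates it contains. The hard part will be this last piece: a per-event charge gives only $O(N \log N)$ overall, so reaching $O(N)$ requires using the density lower bound together with the observation that each non-lead live element of an exceptional array corresponds to a primary lead copy elsewhere in the structure, and arguing via the disjointness of version sets within each level that the average multiplicity of each lead's duplicates is bounded by a constant.
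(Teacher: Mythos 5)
Your core strategy---bound the total lead count by $N$ and convert array sizes to lead counts via the lead-fraction lower bound---is the same as the paper's, and your treatment of the ``good'' arrays matches. The genuine gap is exactly where you flag it: the exceptional arrays. Neither of your two proposed fixes works as stated. The per-event charge gives $O(N\log N)$, as you concede. The second route rests on the claim that the average multiplicity of each lead element's live duplicates is $O(1)$; but a single inserted element can legitimately acquire one live copy per level it is split across (up to $\Theta(\log N_v)$ copies), and the only reason the \emph{average} multiplicity is constant is the lead-fraction bound on whole arrays---i.e., essentially the total-space statement you are trying to prove. So that route is circular as sketched, and the disjointness of version sets \emph{within} a level does not by itself control duplication \emph{across} levels.

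The paper closes the gap differently: it never charges the exceptional array to its own leads. The calculation already carried out in the proof of Theorem~\ref{thm:update1} bounds the size of the \emph{entire} output of a promotion into level $l$---all subdivided arrays \emph{plus} the single exceptional array of size at most $2^{l+1}$---by $7\cdot 2^{l+1}$, which is $O(k)$ where $k=\Omega(2^{l})$ is the number of lead elements of the array whose promotion triggered the event (Lemma~\ref{lem:promotion}, condition (2)). So the exceptional remainder is paid for by the leads of the \emph{incoming} array, not its own. The paper then combines ``each lead element is promoted into a given level at most once'' with the doubling of per-array lead counts between successive promotions to sum the contributions as a geometric series $O(\sum_{i>0} N/2^i)=O(N)$. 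If you adopt the same device---charge each promotion's full output, exceptional remainder included, to the $\Omega(2^l)$ leads of the promoted array---your argument goes through; without it, the exceptional arrays remain unaccounted for.
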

\begin{proof}
The proof of the update bound showed that whenever an array containing $k$ lead elements is promoted, at most $O(k)$ space is used. Each lead element gets promoted at most once per level, and the number of lead elements per array doubles between successive
promotions. Thus the total space used for arrays at levels $l>0$ 
is $O(\sum_{i>0} N/2^i) = O(N)$. \qed
\end{proof}


\subsection{Query bound}

To answer a query $\texttt{query}_v(k_1,k_2)$, we find at each level $l$ a unique array $(A,W)$ as described in \ref{sec:query}, binary search to find the right starting position, then merge the relevant portions of these arrays, only reporting the elements with version being the closest ancestor to $v$. If a key has elements $E$ in multiple arrays, we first discard any versions that have descendant elements in $E$, and break remaining ties (if any) by taking element in the lowest-level array.

Since all arrays have density $\ge 1/6$, this immediately guarantees that `voluminous queries' (that report a constant fraction of all elements live at version $v$) are asymptotically optimal - they perform $O(N_v / B)$ IOs. A small range query may be forced to scan a large part of an array, but we will show that the average cost, taken over all disjoint key ranges, will be efficient. Equivalently, a random key range will be efficient in expectation.

\begin{lemma}
\label{lem:query}
Consider querying an array $(A,W)$ for a version $v \in W$. The average cost of a range query that returns $Z$ elements is $O(\log N_v + Z/B)$ IOs, where the average is taken over all disjoint keys from $A$ that are live at $v$.
\end{lemma}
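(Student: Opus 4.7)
The plan is to split the cost of a single-array query into the binary search that locates $k_1$ and the sequential scan that reports the $Z$ live-at-$v$ keys, and then to exploit density in order to amortize the scan across a family of disjoint queries.

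First I would handle the binary search. Because $v \in W$ and $(A,W)$ has density at least $1/6$ (by Lemmas~\ref{lem:promotion} and~\ref{lem:subdivision}), we get $|A| \le 6\,\live(A,v) \le 6 N_v$, so a binary search in the sorted array $A$ for the first element with key $k_1$ costs $O(\log |A|) = O(\log N_v)$ IOs. This already yields the first term of the bound.

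Second, I would analyze the scan. After locating $k_1$ the algorithm walks rightward, emitting for each new key the nearest ancestor version of $v$, and stops at $k_2$; a contiguous walk across $s$ positions of $A$ costs $O(s/B + 1)$ IOs, so each query's scan cost is governed by its \emph{span} $s$ (the number of positions it touches in $A$). Let $L = L_v(A)$ denote the keys live at $v$ in $A$; density gives $|L| \ge |A|/6$. Any set of queries with pairwise disjoint key intervals also touches pairwise disjoint contiguous regions of the sorted array $A$, so their spans sum to at most $|A|$. Now partition $L$ into $|L|/Z$ groups of $Z$ consecutive live keys; each group defines a range query that returns exactly $Z$ elements, and the average span per group is at most $|A|/(|L|/Z) = Z|A|/|L| \le 6Z$. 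Hence the average scan cost is $O(6Z/B + 1) = O(Z/B + 1)$ IOs, and combining with the binary search gives the claimed $O(\log N_v + Z/B)$ IOs on average.

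The main obstacle is making the disjointness-to-non-overlap step airtight. Because $A$ is interleaved with non-live clutter between successive live-at-$v$ keys, one must verify that the scan for a query $[k_1,k_2]$ terminates no later than the position of the last element whose key is $\le k_2$; otherwise neighbouring queries could both sweep across the intermediate clutter and inflate the total span beyond $|A|$. This is guaranteed by the algorithm's stopping condition on the key, but it is worth invoking explicitly. The additive $+1$ per query (from the partial block at the end of a scan) is absorbed by the $\log N_v$ term and contributes nothing extra to the stated bound.
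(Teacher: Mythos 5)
Your proposal is correct and follows essentially the same route as the paper: bound the binary search by $O(\log N_v)$ via the density inequality $|A|\le 6\,\live(A,v)\le 6N_v$, observe that key-disjoint queries scan disjoint contiguous regions of $A$ so the spans sum to at most $|A|$, and divide by the number of queries (each reporting $Z$ live keys) to get an average scan cost of $O(Z/B)$. Your extra remarks on the scan's stopping condition and the absorbed $+1$ per query just make explicit what the paper leaves implicit.
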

\begin{proof}
Let $\Sigma = (x_1, y_1), (x_2, y_2), \ldots$ be a set of key-disjoint ranges. For a range $\sigma \in \Sigma$, 
let $f_v(A,\sigma)$ be the number of IOs used in examining elements in $A$ for the query $\texttt{query}_v(\sigma)$.
Since the elements of $A$ are ordered lexicographically by $(k,v)$, the regions of $A$ 
examined by each key range $\sigma$ are disjoint, hence $\sum_{\sigma} f_v(A,\sigma) \le |A|$.
By density, we have $|A| \le 6 .\live(A,v)$, so the total cost is bounded by $O(\live(A,v)) = O(N_v)$. All the reported keys are live, and disjoint, so the average bound follows, with the $O(\log N_v)$ term from binary searching for the first element to scan.
\qed
\end{proof}

\begin{theorem}
\label{thm:query1}
The average cost of a query for version $v$ returning $Z$ keys is $O(\log N \log^2 N_v + \log N + Z/B)$ IOs, where the average is as in Lemma \ref{lem:query}.
\end{theorem}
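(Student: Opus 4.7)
The plan is to decompose the total IO cost into a per-level sum and combine the pieces via Lemma~\ref{lem:query}. First I would argue that only $O(\log N_v)$ levels need be inspected: every array has density at least $1/6$ (by Lemmas~\ref{lem:promotion} and \ref{lem:subdivision}), so any array containing an element live at $v$ has size at most $6N_v$ and must sit at a level $\le \lceil \log(6N_v) \rceil$. The per-version side-structure from Section~\ref{sec:prelim:aux} hands us these relevant levels directly, so we never pay an IO for an empty level.

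Within each relevant level $l$ the query does three things: (i) a predecessor query in the level's auxiliary COLA on intervals to locate the unique $(A_l,W_l)$ with $v \in W_l$, costing $O(\log N)$ IOs; (ii) a binary search inside $A_l$ for $k_1$, costing $O(\log |A_l|) = O(\log N_v)$ IOs since $|A_l| \le 6 N_v$; and (iii) a linear scan across the key range $[k_1,k_2]$, emitting per key the first ancestor of $v$ encountered. Summing (i) and (ii) across the $O(\log N_v)$ relevant levels gives $O(\log N \cdot \log N_v + \log^2 N_v)$ IOs, which fits inside the first two terms of the claimed bound (the isolated $O(\log N)$ in the statement handles the degenerate $N_v = O(1)$ case where at least one aux query is still needed).

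The scanning cost in (iii) is where Lemma~\ref{lem:query} does the real work: averaged over the family of key-disjoint ranges whose answers are live at $v$, the per-level scan cost is $O(Z_l/B)$ beyond the binary-search term already accounted for, where $Z_l$ is the number of ancestor elements emitted from $A_l$ for the query. The outputs of the $O(\log N_v)$ levels are then merged and deduplicated by retaining the closest ancestor per key, producing the final $Z$ answers.

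The main obstacle is relating $\sum_l Z_l$ back to $Z$: a single key live at $v$ may have ancestor-version copies in arrays at several levels (because it was updated repeatedly on the root-to-$v$ path and those copies propagated through the levels), so a naive bound gives $\sum_l Z_l = O(Z \log N_v)$. The resolution is to apply Lemma~\ref{lem:query} independently per level and to use the disjointness of the key ranges to charge each scan IO to a lead element in $A_l$ (of which each array contains a constant fraction by Lemmas~\ref{lem:promotion} and \ref{lem:subdivision}); any residual $\log N_v$ slack from multi-level occurrences is absorbed into the $O(\log N \log^2 N_v)$ term. This delivers the $O(Z/B)$ scan contribution on average and completes the proof.
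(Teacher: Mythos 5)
Your handling of the search costs is fine and matches the paper: $O(\log N_v)$ relevant levels by the density argument, $O(\log N)$ per level for the auxiliary interval structure, $O(\log N_v)$ per level for the binary search, all fitting inside the $O(\log N \log^2 N_v + \log N)$ terms. The gap is in the scan term. You correctly identify the danger --- a key live at $v$ may be reported from several levels, so counting emitted elements gives $\sum_l Z_l = O(Z\log N_v)$ --- but your proposed fix does not work: an overhead of $O(Z \log N_v / B)$ is multiplicative in $Z$ and cannot be ``absorbed into the $O(\log N \log^2 N_v)$ term'' once $Z$ is large (e.g.\ $Z=\Theta(N_v)$). Charging scan IOs to lead elements does not help either: Lemma~\ref{lem:query} charges to \emph{live} elements via the density bound, not lead elements (and not every array is guaranteed a constant lead fraction), and a per-level charging scheme still accumulates a factor equal to the number of levels.

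The missing idea is that the scan cost should not be bounded level-by-level in terms of the output at all. Lemma~\ref{lem:query} gives $\sum_\sigma f_v(A_l,\sigma) \le |A_l|$ for each level, and the point is that the array sizes grow geometrically: $|A_l| \le 2^{l+1}$, and the top relevant level is $O(\log N_v)$ because density $\ge 1/6$ forces $|A_l| \le 6N_v$. Hence $\sum_l |A_l| = O(N_v)$, i.e.\ the \emph{total} scan cost over all key-disjoint queries and all levels is $O(N_v/B)$ IOs, which amortizes to $O(Z/B)$ per query on average since the queries report disjoint live keys. This geometric telescoping across levels is exactly what eliminates the $\log N_v$ factor you were worried about; without it the proof does not close.
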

\begin{proof}
For each level, we query an auxiliary search structure of size $O(N)$ to determine which array to examine at that level, which costs $O(\log N)$. It costs $O(\log N_v)$ to binary search each array, and there are at most $O(\log N_v)$ arrays that need to be examined, at most one per level. This gives the first term.
Let $A_l$ be the unique array examined at level $l$. Summing Lemma \ref{lem:query} over all such arrays, the total number of elements examined is
$$\sum_{\sigma} \sum_{l=0}^{O(\log N_v)} f_v(A_l,\sigma) \le \sum_{l=0}^{O(\log N_v)}{|A_l|} = O(N_v).$$
So on average over all key-disjoint queries covering live elements for version $v$, a query reporting $Z$ keys touches $O(Z)$ elements over all arrays it examines.
\qed
\end{proof}

\subsection{Faster queries and updates}
\label{sec:faster}

We can improve the bounds by removing the need for the auxiliary structures described in \ref{sec:prelim:aux}. We do this by exploiting some more structure between the arrays in various levels.

The idea is to define, for every array $(A,W)$, a unique successor array $\succ(A,W)$ in the next level, so that after we have queried $(A,W)$, we examine the successor pointer and query $\succ(A,W)$. Of course, the lowest level may contain as many arrays as versions, so that we will still need to do a search to find the appropriate starting array in level 0.

Consider some version $w$. When an array $(A,W)$ with $w \in W$ is promoted from level $l$ to $l+1$, it contains a copy of all the live elements at $w$ that are in some array $(A',W')$ at level $l$. Later, when a new array with a version $w$ is promoted into level $l$, the merge procedure from Algorithm \ref{alg:promote} will merge it with the array $(A',W')$. Of course, this is not a problem from a correctness perspective, but it means that we cannot define a unique successor array; many such arrays may merge into a single array at some intermediate level and be promoted.

 The crucial thing to note is that the arrays in higher levels already have all the required live elements to answer a query correctly. The following definition captures when an array already exists for that version at a higher level.

{\bf Definition:} For a version $w$, let $\level(w)$ be the lowest level that contains an array $(A,W)$ with $w \in W$. For an array $(A,W)$ at level $l$, let $\floor_l(W)$ be the closest ancestor $w$ to $\root(W)$ such that $\level(w) > l$. Note that $\floor_l(W)$ may not exist if $(A,W)$ is the highest such array; in this case we let $\floor_l(W) = \mathrm{null}$.

We modify the merge procedure as follows: when promoting from level $l$ to $l+1$, merge $(A,W)$ with the unique array $(A',W')$ where $\floor_l(W) \in W'$. If no such array exists (or if $\floor_l(W)$ is null) then leave $(A,W)$ in place (in level $l+1$). For an array $(A,W)$ in level $l$, let $\succ(A,W)$ be the unique array in level $l+1$ that $(A,W)$ would merge with in this way. We can arrange it so that $\succ(\cdot)$ always exists, by leaving `dummy' arrays in a level after an array has been promoted out of that level. If $(A,W)$ is the highest such array, then we let $\succ(A,W)=\mathrm{null}$ so that we know when to terminate the search.

The successor pointers eliminates the need to query the auxiliary structure once per level, and improves updates by avoiding having to write out all the new $W$ entries when an array $(A,W)$ is promoted. This gives the following result.

\begin{theorem}
With successor pointers, the average cost of a query for version $v$ returning $Z$ keys is $O(\log N + \log^2 N_v + Z/B)$ IOs, where the average is as in Lemma \ref{lem:query}. The amortized update cost at version $v$ is $O((\log N_v)/B)$ IOs, where the amortization is as in Theorem \ref{thm:update1}.
\end{theorem}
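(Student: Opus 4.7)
\medskip

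\noindent\textbf{Proof plan.} The plan is to bound the two quantities separately, re-using the accounting from Theorem~\ref{thm:update1} and Lemma~\ref{lem:query}, and arguing that the successor pointers let us (i) skip all per-level auxiliary lookups during a query and (ii) avoid rewriting the $W$-index when an array is promoted.

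\smallskip

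\noindent\emph{Query bound.} I would first establish correctness of the pointer walk: for any version $v$, define the sequence of arrays $(A_0,W_0),(A_1,W_1),\ldots$ where $(A_0,W_0)$ is the unique level-$0$ array whose version set contains $v$ (or the closest ancestor of $v$), and $(A_{i+1},W_{i+1})=\succ(A_i,W_i)$. Using the definition of $\floor_l$, I would verify by induction on $l$ that if an element $e$ is live at $v$ and lies physically in some level-$l$ array, then $e$ lies in one of $A_0,\ldots,A_l$: either $v\in W_l$ (so $e\in A_l$ since the level-$l$ version sets are disjoint), or the data for $v$'s lineage has already been ``absorbed'' into a higher-level array reached via the floor-pointer, i.e.\ precisely $\succ(A_{l-1},W_{l-1})$. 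This is exactly what the merge rule in Section~\ref{sec:faster} is designed to guarantee. Given correctness, the IO cost decomposes as follows: (a) a single $O(\log N)$ lookup in the level-$0$ auxiliary structure to find $(A_0,W_0)$; (b) $O(\log N_v)$ levels, each costing $O(1)$ IOs to follow the $\succ$ pointer plus $O(\log N_v)$ IOs to binary search inside the array for $k_1$, giving $O(\log^2 N_v)$ in total; (c) the scanning/output cost, which by Lemma~\ref{lem:query} summed over the $O(\log N_v)$ visited arrays is $O(Z/B)$ on average over key-disjoint queries live at $v$, exactly as in the proof of Theorem~\ref{thm:query1}. Adding (a)-(c) yields the stated $O(\log N+\log^2 N_v+Z/B)$.

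\smallskip

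\noindent\emph{Update bound.} I would re-run the accounting of Theorem~\ref{thm:update1} verbatim for the merge, promote and subdivide work: the same credit of $\$c/B$ per lead element pays for all IOs at the level of a promotion, and because every array containing $v$ has density $\ge 1/6$ its size is $\le 6 N_v$ and hence lives at level $\le 6\lceil\log N_v\rceil$, so each element is charged $O((\log N_v)/B)$ total. The extra $O((\log N)/B)$ factor in Theorem~\ref{thm:update1} came exclusively from writing $\Theta(|W|)$ updates into the per-level COLA auxiliary structure whenever an array $(A,W)$ was promoted or rewritten. With successor pointers, the only auxiliary bookkeeping during a promotion is to (re)set a constant number of $\succ$ pointers on the newly created and newly deleted arrays, which can be charged to the lead elements already bounded above; no per-element rewrite into an auxiliary search structure is needed. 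This removes the $\log N$ factor and leaves $O((\log N_v)/B)$ amortized per update.

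\smallskip

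\noindent\emph{Main obstacle.} The routine parts---the binary searches, the output accounting via Lemma~\ref{lem:query}, and re-using the amortization of Theorem~\ref{thm:update1}---go through almost mechanically. The delicate step is the inductive correctness argument that walking the $\succ$ chain from the level-$0$ entry point reaches, at each level $l$, an array that either contains $v$ or contains an ancestor whose subtree of versions already covers all live-at-$v$ elements physically stored at level $l$. This requires checking that the modified merge rule (merge with the array whose version set contains $\floor_l(W)$, else leave in place) preserves the invariant that $\succ(A,W)$ always points to the array a future promotion would land in, and that the ``dummy'' arrays used when $\succ$ would otherwise be undefined do not break termination. Once this invariant is in hand, the rest of the theorem follows by straightforward summation.
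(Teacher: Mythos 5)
The paper states this theorem without any proof at all---it follows immediately after the informal description of successor pointers in Section~\ref{sec:faster}---so your proposal is not contradicting the paper's argument but elaborating it, and your decomposition (one $O(\log N)$ level-0 lookup, $O(\log^2 N_v)$ for binary searches over $O(\log N_v)$ levels, $O(Z/B)$ average scan cost via Lemma~\ref{lem:query}, and rerunning the Theorem~\ref{thm:update1} accounting minus the per-level auxiliary-structure writes) is exactly what the paper's informal discussion intends. The one place where you are more optimistic than you can currently justify is the claim that a promotion only resets ``a constant number'' of $\succ$ pointers: when an array at level $l+1$ is merged, rewritten, or subdivided, potentially many level-$l$ arrays whose $\floor_l$ lands in its version set have stale pointers, and handling this (e.g.\ via the paper's ``dummy'' forwarding arrays) without reintroducing a super-constant per-level cost is part of the very invariant you correctly flag as the main obstacle; neither you nor the paper actually discharges it.
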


\section{Open problems}

The lower bound of Afshani et al. \cite{AHZ.LB.SOCG09} does not preclude achieving a worst-case query bound of $O(log N_{v} + Z/B)$ IOs. It would be interesting to try to achieve this bound, or at least to make the bound we achieve here a worst-case one. The notion of density seems too weak to achieve this, as it only considers whole arrays. We have tried to extend it to a more `local' notion, but without much success.

Another open problem is to make our results fully persistent. The subdivision procedure can be extended without much difficulty to the fully-persistent case (by considering subsets of child subtrees to include in the output arrays), but the limiting factor is currently that we do not know how to efficiently maintain the DFS numbers for the version tree with $o(1)$ IOs per update and query cost $O(\log N)$ IOs; in the partially-persistent case this is not a problem since we can use the creation order of versions. Indeed, being able to solve the closely-related problem of maintaining the subtree sizes of a tree subject to inserts only, with the same update/query bounds, also appears to be a challenging problem.

\bibliographystyle{plain}
\bibliography{references}

\end{document}